%%%                                                            %%%
%%%  Heterotic compactifications on nearly K\"ahler manifolds  %%%
%%%                                                            %%%
%%%                           by                               %%%
%%%                                                            %%%
%%% Olaf Lechtenfeld, Christoph N\"olle and Alexander D. Popov %%%
%%%                                                            %%%
%%%%%%%%%%%%%%%%%%%%%%%%%%%%%%%%%%%%%%%%%%%%%%%%%%%%%%%%%%%%%%%%%%
%%%                                                            %%%
%%% arXiv: 1007.nnnn                                           %%%
%%% Hannover preprint ITP-UH-12/10                             %%%
%%% 1+16 pages, macros included, no figures                    %%%
%%%                                                            %%%
%%% released  01 July 2010 to hep-th                           %%%
%%% submitted xx July 2010 to JHEP                             %%%
%%% revised                                             %%%
%%% accepted                                            %%%
%%% replaced to hep-th                                  %%%
%%%                                                            %%%
%%%%%%%%%%%%%%%%%%%%%%%%%%%%%%%%%%%%%%%%%%%%%%%%%%%%%%%%%%%%%%%%%%
\documentclass[a4paper,10pt]{article}

\usepackage[ansinew]{inputenc}
\usepackage{amssymb, amsthm}
\usepackage{latexsym}
\usepackage{amsmath}
\usepackage{hyperref}
\usepackage{authblk}

\oddsidemargin -1mm
\evensidemargin -1mm
\topmargin -10mm
\textheight 220mm
\textwidth 165mm
\setlength{\parskip}{\medskipamount}

\newcommand{\und}{\qquad\text{and}\qquad}
\newcommand{\sfrac}[2]{{\textstyle\frac{#1}{#2}}}
\renewcommand{\=}{\ =\ }
\newcommand{\tr}{\text{tr}}

\newtheorem{theo}{Theorem}[section]

\newtheorem{lem}[theo]{Lemma}
\theoremstyle{definition}

\makeatletter
\renewcommand{\@maketitle}{
\newpage
\null
\vskip 1em%
\begin{flushright}
ITP--UH--12/10\\
\end{flushright}
\begin{center}%
{\Large \textbf \@title \par}%
\vskip 2.5em 
{\large \@author \par}
\vskip 1.5 em
\end{center}%
\par} \makeatother

\title{Heterotic compactifications on nearly K\"ahler manifolds}
\author{
Olaf Lechtenfeld$^{\dagger\times}$, \
Christoph N\"olle$^\dagger$ \ and \ 
Alexander D. Popov$^\ast$}
\date{}

\begin{document}
\setcounter{page}{0}
\parindent=0cm
\maketitle
\thispagestyle{empty}

\begin{center} 
${}^\dagger${\em 
Institut f\"ur Theoretische Physik, Leibniz Universit\"at Hannover \\
Appelstra{\ss}e 2, 30167 Hannover, Germany } \\
email: lechtenf, noelle@itp.uni-hannover.de

\bigskip

${}^\times${\em
Centre for Quantum Engineering and Space-Time Research \\
Leibniz Universit\"at Hannover \\
Welfengarten 1, 30167 Hannover, Germany }

\bigskip

$^\ast${\em Bogoliubov Laboratory of Theoretical Physics, JINR \\
141980 Dubna, Moscow Region, Russia} \\
email: popov@theor.jinr.ru
\end{center}

\bigskip

\begin{abstract} 
We consider compactifications of heterotic supergravity on anti-de Sitter 
space, with a six-dimensional nearly K\"ahler manifold as the internal space. 
Completing the model proposed by Frey and Lippert~\cite{Frey_Lippert} with the
particular choice of SU(3)/U(1)$\times$U(1) for the internal manifold, we show
that it satisfies not only the supersymmetry constraints but also the equations
of motion with string corrections of order~$\alpha'$. Furthermore, we present 
a non-supersymmetric model. In both solutions we find confirmed a recent result
of Ivanov~\cite{Ivanov} on the connection used for anomaly cancellation. 
Interestingly, the volume of the internal space is fixed by the supersymmetry 
constraints and/or the equations of motion.
\end{abstract}

\bigskip

\tableofcontents

\numberwithin{equation}{section}

\section{Introduction}
 There has been considerable interest in heterotic flux compactifications in recent years. Compared to the traditional Calabi-Yau models with their problematic moduli, these might include more realistic examples where all moduli are fixed by a steep effective potential. A pioneering work in this direction was Strominger's analysis of supersymmetric compactifications with only bosonic fields non-vanishing \cite{Strominger}, leading to his well-known set of conditions on the internal manifold and the fields. Although it has been known for quite some time now that these conditions admit non-trivial solutions \cite{Li:2004hx}, explicit models are rare \cite{FuYau08, FIUV, Ugarte09}. 

 It has been proposed in \cite{Govindarajan, Frey_Lippert} to consider more general vacua with external anti-de Sitter space, where the gaugino is not supposed to vanish. Similarly to Strominger's approach Frey and Lippert use the condition for supersymmetric vacua to derive constraints on the internal manifold \cite{Frey_Lippert}.
 Among their solutions the simplest models are given by a direct product of four-dimensional anti-de~Sitter space and a six-dimensional compact nearly K\"ahler manifold. Nearly K\"ahler manifolds carry an SU(3) structure \cite{Friedrich02, CCDLMZ} with totally antisymmetric intrinsic torsion, giving rise to a three-form. The latter makes them promising candidates for compactification models where one can take the $H$-field equal to the intrinsic torsion. This relation follows indeed from the supersymmetry conditions. Furthermore, it is suspected that nearly K\"ahler compactifications have few moduli; e.g.\ only four compact six-dimensional nearly K\"ahler spaces are known. 

 The model proposed by Frey and Lippert is incomplete, since their discussion of the Bianchi identity assumes a certain form of the gauge field, which does not have the instanton property necessary for the vanishing of the gaugino variation. This problem has already been adressed by Manousselis, Prezas and Zoupanos \cite{Manousselis:2005xa}, who also include a dilatino condensate in their considerations, to have more freedom to adjust parameters. They showed that there exist complete supersymmetric solutions for all of the four nearly K\"ahler spaces. 

 Anomaly cancellation in heterotic string theory requires the introduction of a connection $\tilde\Gamma$ on the tangent bundle. One usually takes the Levi-Civit\`a connection plus a torsion given by the $H$-field, yet other choice are possible. For bosonic Minkowski compactifications, Ivanov has shown recently that the choice of $\tilde\Gamma$ crucially determines whether or not the equations of motion follow from the supersymmetry conditions and the Bianchi identity for~$H$~\cite{Ivanov}. We find this result confirmed: supersymmetric solutions fulfil the dilaton equation only if $\tilde\Gamma$ is the instanton connection on the internal space. 
 
In this paper, we construct two complete solutions to the heterotic supergravity equations with string corrections of order $\alpha'$, on the space AdS${}_4(r)\times K(\rho)$, where $K(\rho)$ is the nearly K\"ahler coset SU(3)/U(1)$\times$U(1) with scale~$\rho$. We restrict ourselves to a vanishing gravitino and dilatino,\footnote{In the equations of motion, we also put the dilaton to zero.} which has the advantage that we are able to check explicitly whether the equations of motion are satisfied. Although generally one expects the vanishing of the fermionic supersymmetry variations to imply the equations of motion, this is not always the case. Our first solution is supersymmetric and features a gaugino condensate, while our second one is non-supersymmetric and without gaugino.

After presenting the action, supersymmetry transformation and field equations in Section \ref{sec:ActionEtc}, we find the AdS radius~$r$ fixed in terms of~$\alpha'$. 
The rich differential geometry of homogeneous nearly K\"ahler spaces is reviewed in Section \ref{sec:NKmfs}, where we also calculate all the quantities appearing in the equations throughout this paper.
In Section \ref{sec:susySols} we combine the constraints on the external and internal geometries and show that at least the space SU(3)/U(1)$\times$U(1) supports supersymmetric models, which are not contained in the solutions of~\cite{Manousselis:2005xa}. 
In Section \ref{sec:nonSUSYNKCptf} we present a non-supersymmetric solution to all equations of motion (and the Bianchi identity), again on the coset SU(3)/U(1)$\times$U(1), with vanishing gaugino for simplicity. Again for $\tilde\Gamma$ we need to choose the instanton connection. In both solutions, the gauge connection is proportional to the spin connection, and the scale $\rho$ of the internal manifold is determined by~$\alpha'$.
In the Conclusions we combine our results and display the explicit form of 
our two solutions.
Attempts to solve the equations for the alternative $K$\,=\,SU(2)${}^3$/SU(2) require a different choice for $\tilde\Gamma$, which fails to satisfy the dilaton equation.

\newpage

\section{Action, supersymmetry and field equations}\label{sec:ActionEtc}
\paragraph{Field content.\ }
     The low-energy field theory limit of heterotic string theory is given by $d{=}10$, ${\cal N}{=}1$ supergravity coupled to a super-Yang-Mills multiplet. It
 contains the following fields living on the 10$d$ spacetime $M$ and
transforming in particular irreps of the tangent SO(9,1)~\cite{GSW}:\\[-20pt]
\begin{itemize}
\addtolength{\itemsep}{-6pt}
\item graviton $g$, a metric on $M$, in a \underline{54}
\item dilaton $\phi$, a function on $M$, in a \underline{1}
\item Kalb-Ramond field $B$, a two-form on $M$, in a \underline{45}
\item gauge field $A$, a one-form on $M$, in a \underline{10}
\item gravitino $\psi$, left-handed Majorana-Weyl vector-spinor, in a $\underline{144}_s$
\item dilatino $\lambda$, right-handed Majorana-Weyl spinor, in a $\underline{16}_s$
\item gaugino $\chi$, left-handed Majorana-Weyl spinor, in a $\underline{16}_c$
\end{itemize} 
\vspace{-8pt}
Furthermore there is a left-handed Majorana-Weyl supersymmetry generator $\varepsilon$. Besides these fundamental fields, we need the curvature forms
 \begin{equation}\label{CurvDefi}
  F\=dA + A\wedge  A \und
  H\=dB + \sfrac14\alpha'\big[\omega_{CS}(\tilde \Gamma)-\omega_{CS}(A)\big],
 \end{equation} 
 where $\omega_{CS}$ denotes the Chern-Simons-forms
 \begin{equation}
 \omega_{CS} (\tilde\Gamma) \= \text{tr} \big(\tilde R \wedge \tilde \Gamma - \sfrac 23\tilde  \Gamma\wedge \tilde \Gamma\wedge \tilde \Gamma\big) \und
 \omega_{CS} (A) \= \text{tr} \big(F \wedge A - \sfrac 23 A\wedge A\wedge A\big),
 \end{equation}
and $\tilde\Gamma$ is a connection on $TM$, whose choice is ambiguous.\footnote{
Different connections correspond to different regularization schemes in the 2$d$ sigma model, and are related by field redefinitions, see \cite{HullAnomalies} and the list of references for different choices in the introduction of \cite{FIUV}.
}
Basic choices are:\\[-20pt]
\begin{itemize}
\addtolength{\itemsep}{-6pt}
\item the Levi-Civit\`a connection $\Gamma^{\text{LC}}(g)$
\item the plus-connection $\Gamma^+=\Gamma^{\text{LC}}-\sfrac12H$
\item the minus-connection $\Gamma^-=\Gamma^{\text{LC}}+\sfrac12H$
\item the Chern connection $\Gamma^{\text{Ch}}$ \cite{Strominger}.
\end{itemize}
\vspace{-8pt}
Finally, there appears the space-time curvature form
\begin{equation}
\tilde R \= d\tilde\Gamma + \tilde\Gamma\wedge\tilde\Gamma,
\end{equation}
for whichever connection has been chosen.
Traces are taken over the adjoint representation of the gauge group or 
of SO(9,1), depending on the context.

\paragraph{Action and supersymmetry.\ }
In this paper we are putting the gravitino and dilatino to zero,
\begin{equation}
\psi = 0 \und \lambda = 0 .
\end{equation}
With this, the low-energy action up to and including terms of order $\alpha'$
reads~\cite{Frey_Lippert, BergshdR}
 \begin{equation}
\mathcal S(g,\phi,B,\chi,A)\=\int_M\!\!d^{10}x\sqrt{\det g}\ e^{-2\phi}\Big\{  
\text{Scal} +4 |d\phi|^2 -\sfrac 12 |T|^2 + \sfrac14\alpha'\text{tr}\big( 
|\tilde R|^2-|F|^2 -2 \overline\chi \mathcal D \chi\big)\Big\}, 
\end{equation}
where 
\begin{equation}
T\=H-\sfrac 12 \Sigma \qquad\text{with}\qquad
\Sigma \= \sfrac1{24}\alpha'\,
\text{tr} (\overline\chi\gamma_M\gamma_N\gamma_P\chi)\ 
dx^M\wedge dx^N\wedge dx^P
\end{equation}
and
\begin{equation}
\text{tr}|\tilde R|^2 \= \sfrac 12 \tilde R_{MNPQ}\tilde R^{MNPQ}   \und
\text{tr}|F|^2 \= \sfrac12 \text{tr} F_{MN}F^{MN}
\end{equation}
Finally, $\mathcal D=\gamma^M\nabla_M$ denotes the Dirac operator, 
coupled to $\Gamma^{\text{LC}}(g)$ and to~$A$.
The action is invariant under ${\cal N}{=}1$ supersymmetry transformations \cite{BergshdR}, which act on the fermions as 
  \begin{align}\label{gauginoSusyVar}
    \delta \psi_M &\= \nabla_M \varepsilon -\sfrac 18H_{MNP}\gamma^N\gamma^P\varepsilon + \sfrac 1{96}\gamma(\Sigma)\gamma_M\varepsilon  , \nonumber\\[4pt]
    \delta \lambda &\= -\sfrac 12 \gamma\big(d\phi -\sfrac 1{12}H- \sfrac 1{48}\Sigma\big)\varepsilon ,\\[4pt]
    \delta \chi &\= -\sfrac 14 \gamma(F) \varepsilon, \nonumber
 \end{align}
where $\gamma$ denotes the map from forms to the Clifford algebra,
\begin{equation}
 \gamma \big(\sfrac 1{p!}\omega_{M_1\dots M_p}\ dx^{M_1}\wedge\dots\wedge dx^{M_p}\big)\= \omega_{M_1\dots M_p} \gamma^{M_1} \dots\gamma^{M_p}.
\end{equation}

\paragraph{Field equations.\ }
The equations of motion take the form (we symmetrize with weight one)
\begin{equation} \label{eom}
\begin{aligned}
\text{Ric}_{MN} +2 (\nabla d\phi)_{MN} 
-\sfrac 18 T_{PQ(M}{H_{N)}}^{PQ}
+\sfrac14\alpha' \Big[ \tilde R_{MPQR}{\tilde R_N}^{\ PQR} 
- \text{tr}\big(F_{MP} {F_N}^P
-\sfrac 12\overline \chi\gamma_{(M} \nabla_{N)}\chi\big)\Big]&=0, \\
\text{Scal} + 4 \Delta \phi  -4|d\phi|^2 -\sfrac 12| T|^2
+\sfrac14\alpha'\text{tr}\Big[ |\tilde R|^2 - |F|^2 - 2\overline\chi\mathcal D\chi\Big]&=0 ,\\[2pt]
\big( \mathcal D - \sfrac 1{24} \gamma(T) \big)e^{-2\phi}\chi&=0, \\[4pt]
e^{2\phi }d\ast(e^{-2\phi}F) + A\wedge\ast F- \ast F\wedge A+\ast T\wedge F&=0, \\[4pt]
d\ast e^{-2\phi}T &=0. 
\end{aligned}
\end{equation}
 The derivation of the equations is greatly simplified by a Lemma in \cite{BergshdR}, as was pointed out by Becker and Sethi \cite{BeckerSethi}. It implies that up to this order in $\alpha'$ one can neglect variations of the form $\frac{\partial S}{\partial \tilde \Gamma}\frac {\partial\tilde \Gamma}{\partial(\cdots)}$, for any field~$(\cdots)$.
 Besides these equations, the Bianchi identity for $H$ must be satisfied, which follows from the definition \eqref{CurvDefi}:
 \begin{equation}\label{HBianchi}
  dH\=\sfrac14\alpha' \text{tr}[\tilde R\wedge \tilde R - F\wedge F].
 \end{equation} 
 One expects the vanishing of the fermionic supersymmetry transformations plus the Bianchi identity to imply all equations of motion, but whether or not this is true depends on the connection $\tilde\Gamma$.\footnote{
 Ivanov~\cite{Ivanov} proves that solutions of Strominger's equations (which are equivalent to the vanishing of \eqref{gauginoSusyVar} with vanishing gaugino) satisfy the equations of motion only for $\tilde\Gamma=\Gamma^-$.}

Taking the trace of the Einstein equation (the first one in~(\ref{eom})) gives
 \begin{equation}
  \text{Scal} + 2\Delta \phi -\sfrac 32 (T,H)
   + \sfrac12\alpha' \text{tr}\big[|\tilde R|^2-|F|^2  -\sfrac 12 \overline \chi\mathcal D\chi \big] \=0
  \qquad\text{with}\quad (T,H) \equiv\sfrac 1{3!}T_{MNP}H^{MNP}.
 \end{equation}
This equation can be combined with 
the dilaton equation (the second one in~(\ref{eom})) in two different ways as
\begin{equation}\label{EinsteinDilatonCombiGaugino}
\begin{aligned}
e^{2\phi}\Delta e^{-2\phi} +\sfrac 12 |T|^2 -\sfrac 32 (T,H) +\sfrac14\alpha' \text{tr}\big[ |\tilde R|^2 - |F|^2+\overline\chi\mathcal D\chi\big] &\=0,\\[4pt]
\text{Scal} -\sfrac 92 e^{ 4\phi/3 }\Delta e^{-4\phi/3} -|T|^2+ \sfrac 32 (T,H)-\sfrac34\alpha' \text{tr}(\overline \chi\mathcal D\chi)&\= 0.
\end{aligned}
\end{equation}
 One can replace the dilaton equation by one these, and if further the Einstein equation is satisfied, then the other equation in \eqref{EinsteinDilatonCombiGaugino} is implied. 
For the remainder of the paper we put the dilaton to zero,
\begin{equation}
\phi = 0 ,
\end{equation}
motivated by the results in \cite{Frey_Lippert}. 
This noticeably simplifies the equations of motion~(\ref{eom}).

\paragraph{Space-time factorization.\ }
Our interest is in space-time manifolds of direct product form,
\begin{equation}
M \= \text{AdS}_4(r)\times K,
\end{equation}
with a 4$d$ anti-de Sitter space of `radius'~$r$ as `external' factor 
and a 6$d$ compact Riemannian `internal' space~$K$. 
Small Greek indices shall be restricted to the external part, while
small Latin indices will be reserved for the internal dimensions.
Furthermore, we assume that $F$, $H$ and $\Sigma$ are restricted to~$K$, 
i.e.\ they do not depend on the AdS~coordinates. The components of $\tilde R$ 
in AdS direction are taken to coincide with the Riemann curvature of AdS.
This further simplifies the equations.
{}From now on, hatted quantities refer to the AdS~part, 
and unhatted ones live on~$K$. The ambiguity in picking $\tilde\Gamma$
is a choice of connection on~$K$.

As a consequence of the splitting, the equations of motion (\ref{eom}) 
decompose. The Einstein equation (first in (\ref{eom})) splits into 
\begin{equation} \label{Einsteinsplit}
\begin{aligned}
\widehat{\text{Ric}}_{\mu\nu} +\sfrac14\alpha' 
\widehat{R}_{\mu\alpha\beta\gamma}{\widehat{R}_\nu}^{\ \alpha\beta\gamma} &\=
\sfrac18\alpha'\text{tr} \big(\overline \chi \widehat\gamma_{(\mu}\widehat\nabla_{\nu)}\chi\big), \\[4pt]
 0&\= \sfrac 18\alpha' \text{tr}\big(\overline\chi \big(\widehat \gamma_\mu \nabla_a + \gamma_a\widehat \nabla_\mu\big)\chi\big)\\[4pt]
\text{Ric}_{ab} -\sfrac 18 T_{cd(a}{H_{b)}}^{cd}
+\sfrac14\alpha' \Big[ \tilde R_{acde}{\tilde R_b}^{\ cde} 
- \text{tr}\big(F_{ac} {F_b}^c\big)\Big] &\=
\sfrac18\alpha'\text{tr}\big(\overline \chi \gamma_{(a}\nabla_{b)}\chi\big), 
\end{aligned}
\end{equation}
and the two combinations~(\ref{EinsteinDilatonCombiGaugino}) then read
\begin{equation}
\begin{aligned}
\sfrac12|T|^2 -\sfrac 32 (T,H) 
+\sfrac14\alpha' \text{tr}\Big[ |\widehat{R}|^2 + |\tilde R|^2 - |F|^2 \Big]&\=
-\sfrac14\alpha' \text{tr}\big({\overline\chi}\ 
(\widehat{\mathcal D}+\mathcal D)\ \chi \big), \\[4pt]
\widehat{\text{Scal}} + \text{Scal}  -|T|^2
+ \sfrac 32 (T,H) &\= +\sfrac34\alpha' \text{tr}
\big({\overline\chi}\
(\widehat{\mathcal D}+\mathcal D)\ \chi\big) .
\end{aligned}
\end{equation}
 The gaugino, Yang-Mills and Kalb-Ramond equations become
\begin{equation} \label{gauginoYMKR}
\begin{aligned}
\big(\widehat{\mathcal D}+\mathcal D-\sfrac1{24}\gamma(T)\big)
 \chi&\=0,\\[4pt]
d\ast F + A\wedge\ast F- \ast F\wedge A+\ast T\wedge F&\=0, \\[4pt]
d\ast T &\=0 .
\end{aligned}
\end{equation}
The gravitational data on $\text{AdS}_4(r)$ are
\begin{equation}
\widehat{\text{Scal}}=-\frac {12}{r^2}, \qquad
\widehat{\text{Ric}}= \sfrac14\widehat{\text{Scal}}\ \widehat{g} =
-\frac{3}{r^2}\,\widehat{g}, \qquad
\widehat{R}_{\mu\alpha\beta\gamma}{\widehat{R}_\nu}^{\alpha\beta\gamma} = 
\sfrac1{24}\widehat{\text{Scal}}^2 \widehat{g}_{\mu\nu} 
=\frac {6}{r^4}\, \widehat{g}_{\mu\nu}.
\end{equation}
The gaugino is taken to factorize as
\begin{equation} \label{chifactor}
\chi \= \widehat\chi\otimes\eta + \widehat\chi^*\otimes\eta^*,
\end{equation}
where $\widehat\chi$ is an anticommuting spinor on $\text{AdS}_4$ 
with values in the adjoint of the gauge group, while $\eta$ denotes 
a commuting spinor on~$K$, which we assume to be normalized: $\overline \eta \eta=1$. On $\widehat \chi$ we impose the massless Dirac equation
on $\text{AdS}_4$~\cite{Bachelot07}, 
\begin{equation}
\widehat{\mathcal D}\,\widehat\chi \=0,
\end{equation}
so that the gaugino equation (first in (\ref{gauginoYMKR})) implies
\begin{equation} \label{trgaugino}
\big(\mathcal D -\sfrac1{24}\gamma(T)\big)\,\eta\=0 \und
\tr(\overline\chi\mathcal D\chi)\=\sfrac 1{\alpha'}(T,\Sigma).
\end{equation}
Inserting these relations into the equations of motion, 
we obtain the conditions~\footnote{
Note that $\widehat{\overline\chi}=\widehat{\chi}^\dagger\gamma^0$
but $\overline\eta=\eta^\dagger$.}
\begin{equation}\label{EOM_split}
\begin{aligned}
-\big(\sfrac3{r^2}-\sfrac3{2r^4}\alpha'\big)\,\widehat{g}_{\mu\nu}
&\=\sfrac18\alpha'\text{tr}\big(
\widehat{\overline\chi}\widehat\gamma_{(\mu}\widehat\nabla_{\nu)}\widehat\chi + \widehat{\overline\chi}^*\widehat\gamma_{(\mu}\widehat\nabla_{\nu)}\widehat\chi ^*
\big),\\[2pt]
0\=\text{tr}\Big( (\widehat{\overline\chi}\widehat\gamma_\mu\widehat\chi)
(\overline\eta\nabla_a\eta)+(\widehat{\overline{\chi}}^\ast \widehat\gamma_\mu \widehat\chi^\ast)(\overline\eta^\ast\nabla_a\eta^\ast) - 
(&\widehat{\overline\chi}\widehat\nabla_\mu\widehat\chi^\ast)(\overline\eta\gamma_a\eta^\ast) +(\widehat{\overline\chi}^\ast \widehat\nabla_\mu \widehat \chi)(\overline\eta^\ast \gamma_a\eta)\Big),\\
\text{Ric}_{ab} -\sfrac 18 T_{cd(a}{H_{b)}}^{cd}
+\sfrac14\alpha' \Big[ \tilde R_{acde}{\tilde R_b}^{\ cde} 
- \text{tr}\big(F_{ac} {F_b}^c\big)\Big] &\=
\sfrac18\alpha'\text{tr}\Big( 
(\widehat{\overline\chi}^\ast\widehat\chi)
(\overline\eta^\ast\gamma_{(a}\nabla_{b)}\eta)-
(\widehat{\overline\chi}\widehat\chi^\ast)
(\overline\eta\gamma_{(a}\nabla_{b)}\eta^\ast) \Big),\\
\sfrac12|T|^2 -\sfrac 32 (T,H) 
+\sfrac3{r^4}\alpha'+\sfrac14\alpha'\text{tr}\Big[ |\tilde R|^2-|F|^2 \Big]&\=
-\sfrac 14 (T,\Sigma),\\
-\sfrac{12}{r^2} + \text{Scal}  +\sfrac 12|T|^2&\= 0, \\[4pt]
\big(\mathcal D -\sfrac1{24}\gamma(T)\big) \eta&\=0,  \\[4pt]
 d\ast F + A\wedge\ast F- \ast F\wedge A+\ast T\wedge F&\=0, \\[4pt]
d\ast T&\=0 .
\end{aligned}
\end{equation}
which entangle the internal fields with the AdS data. 

On the right-hand side of these equations we encounter external gaugino 
bilinears, which are nilpotent on the classical level. The standard lore 
to give meaning to these terms performs a quantum average $\langle\dots\rangle$
over the space-time fermionic degrees of freedom.
At this stage, assumptions about the fermionic quantum correlators enter:
we assume the presence of a suitable space-time gaugino condensate as 
a backdrop for the bosonic equations, namely
\begin{equation} \label{condensate}
\langle\tr\,\widehat{\overline\chi}\widehat\chi^\ast\rangle \= i\,\Lambda^3
\qquad\text{but}\qquad
\langle\tr\,\widehat{\overline\chi}\widehat{M}\widehat\chi \rangle \=
\langle\tr\,\widehat{\overline\chi}\widehat{M}\widehat\chi^\ast \rangle \= 0
\end{equation}
for all non-scalar operators $\widehat{M}$,
to be consistent with (\ref{trgaugino}). 
The condensate scale~$\Lambda\in\mathbb R$ will be fixed later.
After averaging over the gaugino,
our set of equations (\ref{EOM_split}) simplifies to
\begin{equation} \label{EOM_condensate}
\begin{aligned}
-\big(\sfrac3{r^2}-\sfrac3{2r^4}\alpha'\big)\,\widehat{g}_{\mu\nu} &\=0,\\[2pt]
\text{Ric}_{ab} -\sfrac 18 T_{cd(a}{H_{b)}}^{cd}
+\sfrac14\alpha' \Big[ \tilde R_{acde}{\tilde R_b}^{\ cde} 
- \text{tr}\big(F_{ac} {F_b}^c\big)\Big] &\=
\sfrac{i}8\,\Lambda^3\alpha'\big(
\overline\eta^\ast\gamma_{(a} \nabla_{b)}\eta-
\overline\eta\,\gamma_{(a} \nabla_{b)}\eta^\ast \big),\\
\sfrac12|T|^2 -\sfrac 32 (T,H) 
+\sfrac3{r^4}\alpha'+\sfrac14\alpha'\text{tr}\Big[ |\tilde R|^2-|F|^2 \Big]&\=
-\sfrac 14 (T,\Sigma),\\
-\sfrac{12}{r^2} + \text{Scal}  +\sfrac 12|T|^2&\= 0, \\[4pt]
\big(\mathcal D -\sfrac1{24}\gamma(T)\big) \eta&\=0,  \\[4pt]
 d\ast F + A\wedge\ast F- \ast F\wedge A+\ast T\wedge F&\=0, \\[4pt]
d\ast T&\=0,
\end{aligned}
\end{equation}
where we continue to use the symbol $\Sigma$ for the condensate
\begin{equation}
\langle \Sigma \rangle \= \sfrac{i}{24}\,\Lambda^3\alpha'\big(
\overline\eta^\ast\gamma_a\gamma_b\gamma_c\eta -
\overline\eta\,\gamma_a\gamma_b\gamma_c\eta^\ast \big)\,
dx^a\wedge dx^b\wedge dx^c .
\end{equation}
Remarkably,
the first equation fixes the $\text{AdS}_4$ radius in terms of~$\alpha'$,
\begin{equation}
r^2 \= \sfrac12\alpha'.
\end{equation}
We note that tracing the second equation simplifies its right-hand side
to $\sfrac18(T,\Sigma)$, proportional to the right-hand side of the
third equation. In the fourth equation (the dilaton equation),
the negative contribution $\widehat{\text{Scal}}=-\frac{12}{r^2}$ 
allows for internal manifolds of positive scalar curvature, 
which are excluded in the usual Minkowski compactifications.
The final three equations are conditions on the commuting spinor~$\eta$,
the Yang-Mills connection~$A$ and the torsion~$H$ on the internal manifold~$K$.
We shall construct solutions to them, after having introduced the geometry of
nearly K\"ahler manifolds in the following section.

\section{Nearly K\"ahler manifolds}\label{sec:NKmfs}
\paragraph{Definitions.}\label{su3StructureDefis}
A six-dimensional Riemannian manfiold $(K,g)$ is said to carry an SU(3) 
structure, if there is a compatible almost-complex structure 
$J\in \Gamma$(End$(TK))$, with $J^2=-1$ and $g(JX,JY)=g(X,Y)$, 
the associated two-form $\omega=g(\cdot,J\cdot)\in \Omega^{(1,1)}(K)$, 
and a non-vanishing (3,0)-form $\Omega\in \Omega^{(3,0)} (K)$. 
Every SU(3)-manifold has a metric-compatible connection $\nabla^-$ with 
holonomy contained in SU(3). If it coincides with the Levi-Civit\`a connection 
$\nabla^{LC}$ one has a Calabi-Yau space, with $d\omega=d\Omega=0$. 
Furthermore, SU(3)-structure manifolds are spin and carry a nontrivial, 
covariantly constant spinor (w.r.t.\ the canonical connection) of each 
chirality, $\eta$ and $\eta^\ast$, which are charge conjugates of each other. 
A nearly K\"ahler manifold is characterized by the conditions 
\begin{equation}\label{NK_defRels}
d\omega \= -3\,\varsigma\, \text{Re}(\Omega),\qquad 
d\Omega \= 2i\varsigma\, \omega\wedge \omega
\qquad\text{with}\quad \varsigma\in\mathbb R.
\end{equation} 
These forms can be constructed from the spinors $\eta$ and $\eta^\ast$ as 
\begin{align}
\omega&\=\sfrac i2\,\overline\eta\,\gamma_a\gamma_b\eta\,e^a\wedge e^b,
\nonumber\\[4pt]
\Omega&\=+\sfrac 16\,\overline\eta\,\gamma_a\gamma_b\gamma_c\eta^\ast
e^a\wedge e^b\wedge e^c, \\[4pt]
\overline\Omega&\=-\sfrac 16\,\overline\eta^\ast\gamma_a\gamma_b\gamma_c\eta\,
e^a\wedge e^b\wedge e^c, \nonumber
\end{align}
where the $e^a$ form an orthonormal frame of one-forms for $T^\ast(K)$.
Every compact six-dimensional Einstein manifold carrying a nontrivial 
Killing spinor $\zeta$, i.e.\ one satisfying 
\begin{equation}
\nabla_a \zeta \= \vartheta\,\gamma_a \zeta \qquad\text{with}\qquad
\vartheta\=\pm\sfrac i2 \sqrt{\sfrac {\text{Scal}}{30}},
\end{equation}
has a nearly K\"ahler structure (the converse is also true \cite{Grunewald90}).
Indeed, one can choose $\zeta$ to be of the form $\eta +\eta^\ast$ for some 
positive-chirality spinor $\eta$, which then defines the SU(3) structure 
$(J,\omega,\Omega)$ satisfying (\ref{NK_defRels}) with the particular value
\begin{equation}
\varsigma\=\sqrt{\sfrac{\text{Scal}}{30}} \qquad\text{where}\qquad
\text{Scal}=\text{Scal}(\nabla^{LC}). 
\end{equation}
Another important quantity in the study of a nearly K\"ahler manifold is its intrinsic torsion. This is defined as the torsion of the canonical connection, and it is totally antisymmetric, which is useful in regard of the supersymmetry equations. In this paper, we identify the intrinsic torsion with (one-half of) the
$H$-flux
\begin{equation}
H \= \sfrac1{6}\,H_{abc}\,e^a\wedge e^b\wedge e^c.
\end{equation}
In terms of the basic three-form, it is given by
\begin{equation}\label{H-OmegaRel}
H\= -\sfrac{i}2\varsigma\,\big(\Omega-\overline \Omega\big)\=
\varsigma\,\text{Im}(\Omega).
\end{equation}
Therefore, the canonical connection reads
\begin{equation}
\nabla^- \= \nabla^{LC}\ +\ \sfrac12 H^a_{bc}\,e^b \otimes (E_a\otimes e^c),
\end{equation}
with vector fields $E_a$ dual to the one-forms $e^a$.

\paragraph{Properties of nearly K\"ahler manifolds.}
 Let $K$ be a nearly K\"ahler manifold. The following properties of $K$ can all be proven by some elementary spinor calculus, the assumption that $\eta+\eta^\ast$ is a Killing spinor for the Levi-Civit\`a connection on $K$, and multiple application of the following Fierz identities:
\begin{align}
\eta\ \overline\eta &\= \sfrac 18\big(1 - \sfrac 12
(\overline\eta\gamma_{ab}\eta)\gamma^{ab}\big)\big(1+(-1)^{\text{Chir}}\big), 
\nonumber \\[4pt]
\eta^\ast\overline\eta &\=-\sfrac 1{48}\Omega_{abc}\gamma^a\gamma^b\gamma^c,
\\[4pt]
\eta\ \overline\eta^\ast\!\!&\=+\sfrac 1{48}\overline\Omega_{abc}
\gamma^a\gamma^b\gamma^c. \nonumber 
\end{align}
 Here `Chir' denotes the chirality operator, giving 0 on positive chirality spinors, and 1 on negative chirality.
 Our convention for the gamma-matrices is $\{\gamma^a,\gamma^b\} = 2g^{ab}$. The forms satisfy the duality relations
\begin{equation} \label{duality}
{\ast}\Omega \= -i\Omega,\qquad 
\ast\overline\Omega \= i\overline \Omega, \qquad
2\,{\ast}\omega \= \omega\wedge \omega,
\end{equation}
and they act on the spinors $\eta$ and $\eta^\ast$ as 
\begin{equation}\label{OmegaEtaProp}
 \begin{aligned}
   \Omega_{abc}\gamma^c \eta^\ast&\=0, \qquad\qquad\qquad
   \overline\Omega_{abc}\gamma^c \eta\=0, \\[4pt]
   \Omega_{abc}\gamma^b\gamma^c\eta &\= -8\gamma_a \eta^\ast,\qquad 
   \overline\Omega_{abc}\gamma^b\gamma^c\eta^\ast \= 8\gamma_a \eta.
 \end{aligned}
\end{equation}
Our forms are normalized as
\begin{equation}
(\omega,\omega) = 3,\qquad 
(\Omega,\overline\Omega) =8,\qquad 
(\Omega,\Omega)= (\overline\Omega,\overline\Omega)=0,
\end{equation}
where $(\cdot,\cdot)$ denotes the metric induced on $\Omega(K)$ by $g$. 
This implies
\begin{equation}\label{NK_Form-Vol-Rel}
\omega^3=6\,\text{Vol} \und \Omega\wedge \overline \Omega = -8i\,\text{Vol}.
\end{equation} 
The derivatives $d\omega$ and $d\Omega$ are given in \eqref{NK_defRels}.
 
\paragraph{Coset models.}
 There are four known compact six-dimensional nearly K\"ahler manifolds, which can all be represented as coset spaces $K=G/H$, for two Lie groups $H\subset G$, where $H$ is isomorphic to a subgroup of SU(3), and not to be confused with the torsion form $H\in \Omega(K)$:
\begin{equation}
\begin{aligned}
    SU(3)/U(1)\times U(1),\qquad & \quad Sp(2)/Sp(1)\times U(1),\\[2pt]
   G_2/SU(3)=S^6,\quad\qquad & SU(2)^3/SU(2)_{\text{diag}}=S^3\times S^3.
\end{aligned}
\end{equation}

Their nearly K\"ahler structure comes from a so-called 3-symmetry, 
i.e.\ an automorphism 
\begin{equation}
s:G\rightarrow G \qquad\text{with}\qquad 
s^3=id_G \quad\text{and}\quad s\big|_H=id_H.
\end{equation} 
For a precise definition consult \cite{Butruille}. 
The differential $ds:\mathfrak g\rightarrow \mathfrak g$ 
has three possible eigenvalues, namely 
\begin{equation}
1,\qquad j=-\sfrac 12+i\sfrac{\sqrt 3}2, \qquad 
j^2=\overline j=-\sfrac 12-i\sfrac{\sqrt 3}2,
\end{equation}
with eigenspace decomposition $\mathfrak g^\mathbb C= \mathfrak h^\mathbb C \oplus \mathfrak m^+\oplus \mathfrak m^-$. Then $\mathfrak m=(\mathfrak m^ +\oplus \mathfrak m^-)\cap \mathfrak g$ can be identified with the tangent space $T_{[e]} G/H$. Identifying $\mathfrak m^+$ with (1,0)-type vectors and $\mathfrak m^-$ with (0,1)-type ones, we obtain an almost-complex structure $J$ on $G/H$, which relates to the 3-symmetry via
\begin{equation}
ds\big|_{\mathfrak m}\=-\sfrac 12 id +\sfrac{\sqrt3}2 J. 
\end{equation}
This decomposition has the following properties,
 \begin{align}
   [\mathfrak h\,,\mathfrak m^+] &\subset \mathfrak m^+,\qquad 
   [\mathfrak h\,,\mathfrak m^-] \subset \mathfrak m^-, \nonumber\\
   [\mathfrak m^+,\mathfrak m^+] &\subset \mathfrak m^- ,\qquad 
   [\mathfrak m^-,\mathfrak m^-] \subset \mathfrak m^+, \\
   [\mathfrak m^+,\mathfrak m^-] &\subset \mathfrak h^ \mathbb C, \qquad\
   [\mathfrak h\,,\mathfrak h\,] \subset \mathfrak h. \nonumber
 \end{align}

 Let $\{E_a\}_{a=1,\dots ,6}$ be a basis of $\mathfrak m$, $\{E_k\}_{k=7,\dots,6{+}\dim{\mathfrak h}}$ a basis of $\mathfrak h$, and $\{e^a\}$, $\{e^k\}$ the dual bases. As indicated, we use letters $a,b,c,\dots$ for indices in $\mathfrak m$ and $i,j,k,\dots$ for those in $\mathfrak h$. We also identify the $e^a$ and $e^k$ with the correponding left-invariant forms on $G$. Locally one can find a smooth map $\alpha:G/H\to G$, by which we pull back the forms on $G$ to forms on $G/H$, which we again denote by the same symbols. An important technical tool is the Maurer-Cartan equation
 \begin{equation}
 \begin{aligned}
 de^a &\= -\sfrac 12 f^a_{bc}\,e^b\wedge e^ c - f^a_{bk}\,e^b\wedge e^k,\\[2pt]
 de^k &\=- \sfrac 12f^k_{bc}\,e^b\wedge e^ c -\sfrac 12 f^k_{ij}\,e^i\wedge e^j.
 \end{aligned}
 \end{equation}
For the metric on $\mathfrak m$ we choose minus the Cartan-Killing form of $\mathfrak g$ restricted to $\mathfrak m$, i.e.
\begin{equation}
 g _{ab} \=- f^c_{ad}f^d_{bc} - 2f^c_{ak}f^k_{bc}.
\end{equation} 
 Both terms on the right-hand side are separately proportional to the metric
\cite{YMNK},
  \begin{equation}
    f^c_{ad}f^d_{bc} \= f^c_{ak}f^k_{bc} \= -\sfrac 13 g_{ab}.
  \end{equation} 
The last two equations fix the scale of the internal space $G/H$. 
This is done here for simplicity only; later on, we shall allow for
a rescaling~$\rho$ of the metric.
 
\paragraph{Connections and curvature in the coset models.}
 Using the above mentioned properties, one can determine the Levi-Civit\`a connection on $G/H$ by the conditions of metricity,
\begin{equation}
 0\=E_a g_{bc}\=\Gamma^d_{ab}g_{dc}+\Gamma^d_{ac}g_{bd} ,
\end{equation}
 and the vanishing of the torsion,
\begin{equation}
 de^ a +{\Gamma^a}_{b} \wedge e^b \=0.
\end{equation}
 One finds that
 \begin{equation}
   \Gamma \= \big(f^a_{ic}e^i+\sfrac12 f^a_{bc}e^ b\big) (E_a \otimes e^ c) .
 \end{equation}  
We will need not only the Levi-Civit\`a connection 
but a metric connection with torsion of the form 
\begin{equation}
T\=\kappa\,f^a_{bc}\,e^b\otimes (E_a\otimes e^c)\ \in\Omega^1(\text{End}(TK))
\qquad\text{with}\quad\kappa\in\mathbb R. 
\end{equation}
The canonical connection is included for $\kappa=-1/2$. Thus we have
\begin{equation}\label{kappaConnection}
\Gamma^\kappa \= 
\big(f^a_{ic}e^i+\sfrac12 \phi\,f^a_{bc}e^ b\big) (E_a \otimes e^ c) 
\qquad\text{with}\quad \phi:=2\kappa{+}1.
\end{equation} 
The curvature tensors of $\Gamma^\kappa$ are found to be
\begin{align}
R^\kappa &\=-\sfrac14\big\{\phi f^e_{ab}f^ c_{ed} + 2f^ k_{ab}f^ c_{kd} -\phi^2f^c_{ae} f^e_{bd}\big\}\,e^a\!\wedge e^b (E_c\otimes e^ d), \nonumber \\[2pt]
\text{Ric}^\kappa&\= -\sfrac 1{12} (4\kappa^2{-}5)\,g, \\[2pt]
\text{Scal}^\kappa &\= -\sfrac12 \phi^2+\phi+2\=-2\kappa^2+\sfrac52.\nonumber 
\end{align}  

In the following we will denote $\nabla^\kappa$ for special values 
of $\kappa$ as
\begin{equation}
\begin{aligned}
\nabla^{-\frac12}=:\nabla^- &\qquad\text{canonical connection} 
\qquad&\Longrightarrow&\qquad R^{-\frac12}=:R^-,\\[2pt]
\nabla^{0}=:\nabla\ \ &\qquad\text{Levi-Civit\`a connection} 
\qquad&\Longrightarrow&\qquad \ \ R^{0}=:R,\\
\nabla^{+\frac12}=:\nabla^+ &\qquad\text{no name} 
\qquad&\Longrightarrow&\qquad R^{+\frac12}=:R^+,\\
\end{aligned}
\end{equation}

The curvatures $R^\pm$ are sometimes denoted as $R^\pm\big|_\mathfrak m$, in 
order to distinguish them from the curvature $R^-\big|_\mathfrak h$ of another 
so-called canonical connection on the principal $H$-bundle $G\rightarrow G/H$.
The latter acts in the adjoint representation on $\mathfrak h$, but it has the 
same functional form as $R^-\big|_\mathfrak m$~\cite{KobayashiNomizu, YMNK}.
Explicit formulae are
 \begin{align}\label{CurvaturesExplicit}
  R^+ \big|_\mathfrak m&\= -\text{ad}(E_a)\circ\pi_\mathfrak h\circ \text{ad}(E_b)\,e^a\!\wedge e^b, \qquad
    {(R^+)^c}_{dab} \= 2f^c_{k[a}f^k_{b]d},\nonumber\\[2pt]
  R^-\big|_\mathfrak m&\=-\sfrac 12 f^k_{ab}\, \text{ad}_\mathfrak m (E_k)\,e^a\!\wedge e^b,\qquad\qquad\ {(R^-\big|_\mathfrak m)^c}_{dab} \=-f^k_{ab} f^c_{kd},\\[2pt]
  R^-\big|_\mathfrak h &\= -\sfrac 12 f^k_{ab}\, \text{ad}_\mathfrak h(E_k)\,e^a\!\wedge e^b,\qquad\qquad \ \ {(R^-\big|_\mathfrak h)^k}_{lab}\=-f^m_{ab}f^k_{ml }, \nonumber 
 \end{align}
 where $\pi_\mathfrak h$ is the projection of $\mathfrak g$ onto $\mathfrak h$. The torsion three-form of the canonical connection $\nabla^-$ reads
 \begin{equation}
   H\= -\sfrac 16 f_{abc}\,e^a\wedge e^b\wedge e^c.
 \end{equation} 
{}From the explicit expression for $R^\kappa$ above we compute various
curvature bilinears for later use:
 \begin{equation}
 \begin{aligned}
   \text{tr}_\mathfrak m R^+\!\wedge R^+ &\= -\sfrac 14\langle E_k,E_l\rangle _\mathfrak h f^k_{ab}f^l_{cd} e^{abcd}, \\[2pt]
   \text{tr}_\mathfrak m R^-\!\wedge R^- &\= +\sfrac 14\langle E_k,E_l\rangle _\mathfrak m f^k_{ab}f^l_{cd} e^{abcd} , \\[2pt]
   \text{tr}_\mathfrak h R^-\!\wedge R^- &\= +\sfrac 14\langle E_k,E_l\rangle _\mathfrak h f^k_{ab}f^l_{cd} e^{abcd} , \\[2pt]
   dH &\= -\sfrac 14 f_{abk}f^k_{cd} e^{abcd}, 
 \end{aligned}
 \end{equation}
where tr denotes minus the usual trace, 
$e^{abcd}=e^a\wedge e^b\wedge e^c\wedge e^d$, and 
\begin{equation}
\langle E_k,E_l\rangle_\mathfrak h \= -f^m_{kn}f^n_{lm}
\qquad\text{as well as}\qquad
\langle E_k,E_l\rangle_\mathfrak m \= -f^a_{kb}f^b_{la}
\end{equation} 
are minus the Cartan-Killing forms on $\mathfrak h$ and $\mathfrak m$, 
respectively. 
On three of the four nearly K\"ahler cosets $G/H$ 
(those where $\mathfrak h$ is semisimple) we have the identity
\begin{equation}
\langle \cdot,\cdot \rangle_\mathfrak h  \= 
\beta \langle\cdot,\cdot\rangle_\mathfrak g
\end{equation} 
for a specific value of $\beta$.
The value of $\beta$ can be calculated in orthonormal coordinates on $G$ as 
follows: 
\begin{equation}
f^k_{ab}f^k_{ab}=\sfrac 13\delta_{aa}=2 \und
f^k_{ab}f^k_{ab} =(1{-}\beta)\,\delta^{kk}=(1{-}\beta)\,\text{dim}(H)
\qquad\Longrightarrow\qquad
\beta = 1-\frac 2{\text{dim}(H)},
\end{equation}
for semisimple $\mathfrak h$.
For our examples this yields
\begin{center}
\begin{tabular}{|c|c|c|c|c|}
\hline
&SU(3)/U(1)$\times$U(1)&Sp(2)/Sp(1)$\times$U(1)&$G_2$/SU(3)&SU(2)$^3$/SU(2) \\ 
\hline 
$\beta$ & 0 & -- & 3/4 & 1/3 \\
\hline
\end{tabular}
\end{center}
and we get ($\phi=2\kappa{+}1)$
\begin{equation}\label{BianchiExpressions}
\begin{aligned}
\text{tr}_\mathfrak m R^+\!\wedge R^+ &\= \beta\, dH,  \\
\text{tr}_\mathfrak m R^-\!\wedge R^- &\= (\beta-1)\,dH, \\
\text{tr}_\mathfrak h R^-\!\wedge R^- &\= -\beta\, dH, \\
\text{tr}_\mathfrak m R^\kappa\!\wedge R^\kappa &\=
(\beta+\sfrac14\phi^2-1)\,dH,\\
\text{tr}_\mathfrak m |R^+|^2 &\= \sfrac 43-\beta, \\
\text{tr}_\mathfrak m |R^-|^2 &\= 1-\beta, \\
\text{tr}_\mathfrak h |R^-|^2 &\= \beta, \\
\text{tr}_\mathfrak m |R^\kappa|^2 &\= 
\sfrac1{24}\phi^2\big(\phi^2-2\big) +1-\beta, \\
R^+_{acde}R_{\, b}^{+cde} &\= \sfrac {4-3\beta}9\,g_{ab}, \\
R^-_{acde}R_{\, b}^{-cde} &\= \sfrac {1-\beta}3\,g_{ab}.
\end{aligned}
\end{equation}

In the compactifications we consider, 
the curvature $\tilde R$ of the tangent bundle of $K$ is one of the $R^\kappa$,
usually either $R^+$ or $R^-$. In contrast, the gauge field $F$ is free to live
on an arbitrary bundle, and so we also have the choice $F=R^-\big|_\mathfrak h$ 
at our disposal. The supersymmetry constraint, however, forces $F$ to be a 
(generalized) instanton, meaning that $\ast F= -\omega\wedge F$. 
This is satisfied only by $R^-$ (both on $\mathfrak m$ or on $\mathfrak h$). 
If $\mathfrak h$ is abelian, we will also have the freedom to rescale 
$\Gamma^-$ without losing the instanton property. Hence, for 
SU(3)/U(1)$\times$U(1) we may take $F=\lambda R^-$ with $\lambda\in\mathbb R$ 
(cf.\ \cite{Popov09}). Yet even without the supersymmetry constraint it is very
convenient to choose an instanton solution for the gauge field because it 
automatically satisfies the Yang-Mills equation. 

On the space Sp(2)/Sp(1)$\times $U(1) there is no common value for $\beta$. 
Instead, we have $\beta=0$ on $\mathfrak u(1)$ and $\beta = 2/3$ on 
$\mathfrak{sp}(1)$. This allows one to calculate the quantities 
tr$(R^\pm\wedge R^\pm)$, which are no longer proportional to $dH$. 
However, we have the freedom to restrict the curvature $\tilde R$ to the 
$\mathfrak u(1)$ part of~$\mathfrak h$.
Again this can be rescaled, enabling us to satisfy the Bianchi identity
for a particular choice, 
\begin{equation}
\tilde R=R\big|_{\mathfrak u(1)} \und 
F=R^-\big|_\mathfrak m \qquad\Longrightarrow\qquad
\text{tr}(\tilde R\wedge \tilde R - F\wedge F) \= r\, dH 
\qquad\text{with}\quad r=1/3.
\end{equation} 
In contrast to the SU(3)/U(1)$\times $U(1) case, 
there are then no free parameters left in the gauge field.
We will find below that only $r=2/3$ and $r=4/9$ are compatible with the 
dilaton equation, in the supersymmetric and non-supersymmetric case, 
respectively. Therefore we will not consider this coset space any further.

\paragraph{Forms in the coset models.}
 Now we can compare the general theory to our concrete realization 
in terms of coset models. In particular, we identified
\begin{equation}
T^c_{ab}=-\sfrac 12f^c_{ab} \und \text{Scal}\equiv\text{Scal}^0=\sfrac52 
\qquad\Longrightarrow\qquad
\varsigma \= \sqrt{\sfrac{\text{Scal}}{30}} \= \sqrt{\sfrac 1{12}}.
\end{equation}
{}From this we easily deduce
\begin{equation}
(\Omega-\overline\Omega)_{abc}\=-4\sqrt 3\,i\,f_{abc} \und H_{abc}\=-f_{abc}
\end{equation}
as well as the relations
\begin{align}
  d\omega &\= -\sfrac{3}2\varsigma\,\big(\Omega +\overline \Omega\big) \=-\sfrac{\sqrt 3}4\big(\Omega +\overline \Omega\big)\= 3\,{\ast}H,\nonumber\\[2pt]
   d\Omega &\=-d\overline \Omega \=2i\varsigma\,\omega\wedge \omega \= \sfrac 1{\sqrt 3} i\,\omega\wedge \omega, \\[2pt]
  dH &\= \sfrac1{15} \text{Scal}\,\omega\wedge \omega \= \sfrac 16\,\omega\wedge \omega.\nonumber
 \end{align}

With $\varsigma$ we have fixed the scale of the nearly K\"ahler manifold.
Let us finally see what happens if we relax this scale by allowing for an
arbitrary value of~$\varsigma$.
The new metric is then $g'=\rho g$, with $\rho=\frac 1{12 \varsigma^2}.$ 
Denoting $g'$ by $g$ again, we get the relations
\begin{equation}
\begin{aligned}
\text{Ric}&\=\sfrac 5{12}\rho^{-1} g,\qquad
\text{Scal}\=\sfrac 5{2}\rho^{-1},\\[2pt]
|H|^2 &\= \sfrac13 \rho^{-1},\qquad 
H_{acd}{H_b}^{cd} \=\sfrac13 \rho^{-1} 
g_{ab}, 
\end{aligned}
\end{equation}
and the quantities given in \eqref{BianchiExpressions} scale as follows,
\begin{equation}
\text{tr } R\wedge R\ \sim\ \rho^{-1} dH, \qquad 
\text{tr } |R|^2\ \sim\ \rho^{-2},\qquad 
R_{acde}{R_b}^{cde}\ \sim\ \rho^{-2}g_{ab}.
\end{equation}

\section{Supersymmetric solutions with gaugino condensate}\label{sec:susySols}
  In \cite{Govindarajan, Frey_Lippert} solutions of the supersymmetry equations (\ref{gauginoSusyVar}) were constructed on the product of AdS$_4$ with a nearly K\"ahler space~$K$, by setting $ \phi=0$ and $H=-\frac 14\Sigma$ equal to the intrinsic torsion of $K$. The metric on $K=G/H$ is a scale factor $\rho$, to be determined below, multiplied by (the negative of) the Killing form of $G$. Frey and Lippert in \cite{Frey_Lippert} also propose a certain ansatz for the gauge field, which is not an instanton however and therefore has a non-vanishing gaugino variation. We will discuss the choice of the gauge field later, together with the Bianchi identity and dilaton equation.
The supersymmetry generator $\varepsilon$ is obtained as follows 
\cite{Frey_Lippert}: AdS$_4(r)$ carries a Killing spinor 
$\widehat\zeta+\widehat\zeta^\ast$ with Killing number 
$\vartheta = \frac 1{2r}=\frac 1{\sqrt{2\alpha'}}$ \cite{Bohle,Baum}, i.e.
\begin{equation}\label{AdSKillingSpinor}
\nabla_\mu \widehat\zeta \= \vartheta\,\gamma_\mu\widehat\zeta^\ast \und
\nabla_\mu \widehat\zeta^\ast \= \vartheta\,\gamma_\mu\widehat\zeta.
\end{equation} 
On $K$ we have the Killing spinor $\eta$ (of positive chirality) with $\nabla^- \eta=\nabla^-\eta^\ast=0$, which gives
\begin{equation}\label{10dSpinor}
\varepsilon \=e^{\frac{i\pi}4}\widehat\zeta\otimes\eta
\ +\ e^{-\frac{i\pi}4}\widehat\zeta^\ast \otimes \eta^\ast.
\end{equation} 
Apparently the dilatino variation $\delta\lambda$ in (\ref{gauginoSusyVar}) vanishes, and for the AdS$_4$ components of $\delta\psi_\mu$ we obtain
\begin{equation}
 \begin{aligned}
   \delta\psi_\mu &\= \nabla_\mu \varepsilon\ +\ \sfrac 1{96}\gamma(\Sigma)\gamma_\mu \varepsilon \\[2pt]
      &\=e^{\frac{i\pi}4} \vartheta\,\gamma_\mu \widehat\zeta^\ast \otimes \eta\ +\ e^{-\frac{i\pi}4}\vartheta\,\gamma_\mu\widehat\zeta \otimes \eta^\ast\ -\ \sfrac 1{96}\gamma_\mu\gamma(\Sigma)\big(e^{\frac{i\pi}4}\widehat\zeta \otimes \eta + e^{-\frac{i\pi}4}\widehat\zeta^\ast \otimes \eta^\ast\big).
 \end{aligned}
\end{equation}
The condition $\Sigma=-4H$ fixes the condensate scale~$\Lambda$
in the condensate~(\ref{condensate}). Together with (\ref{H-OmegaRel}),
it follows that
\begin{equation}
\Sigma\= 2i\,\varsigma\,(\Omega-\overline\Omega) \und
\varsigma\=\sqrt{\sfrac{\text{Scal}}{30}}\=\sqrt{\sfrac{\rho}{12}}
\qquad\Leftrightarrow\qquad \rho=\frac 1{12\,\varsigma^2},
\end{equation}
relating $\varsigma$ to the scale $\rho$ of $K$. 
Together with (\ref{OmegaEtaProp}) this implies
\begin{equation}
\gamma(\Sigma)\,\eta \= -96i\,\varsigma\,\eta^\ast \und
\gamma(\Sigma)\,\eta^\ast \=-96i\,\varsigma\,\eta.
\end{equation}
Finally we get
\begin{equation}\label{GauginoPsiVrtn}
\delta \psi_\mu \= (\vartheta-\varsigma)
\Big[e^{\frac{i\pi}4}\gamma_\mu\,\widehat\zeta^\ast\!\otimes\eta + 
e^{-\frac{i\pi}4}\gamma_\mu\,\widehat\zeta\otimes \eta^\ast\Big]
\qquad\Longrightarrow\qquad \vartheta\=\varsigma.
\end{equation} 
The $K$ component $\delta \psi_a$ gives zero as well, due to $\gamma(\Sigma)\gamma_a\eta=\gamma(\Sigma)\gamma_a\eta^\ast =0$. 

\paragraph{Implications of supersymmetry.}
 Due to the condition $\vartheta=\varsigma$ we can determine the total scalar curvature of our space AdS$_4\times K$:
\begin{equation}\label{totalcurv}
\widehat{\text{Scal}}\ +\ \text{Scal} \= -48\,\vartheta^2 + 30\,\varsigma^2 \= -18\,\vartheta^2\=-\frac 9{\alpha'},
\end{equation}
which is strictly negative. Furthermore,
\begin{equation} \label{SigmaT}
\Sigma\=-4H \qquad\Longrightarrow\qquad T\=H-\sfrac 12 \Sigma \=3H, 
\end{equation}
which turns the gaugino equation (left of (\ref{trgaugino})) into the form
 \begin{equation}\label{gauginospecial}
   \big(\mathcal D -\sfrac 18 \gamma(H)\big)\chi \=0.
 \end{equation}  
This will be satisfied if we decompose the gaugino as in~(\ref{chifactor}), 
with $\eta$ being the Killing spinor of $K$ \cite{Frey_Lippert}. 
Having adjusted the value of~$\Lambda$, the three-form $\Sigma$ assumes 
the required structure $\Sigma=2i\varsigma\,(\Omega-\overline\Omega)$. 

\paragraph{Einstein and dilaton equations.}
The external Einstein equation (the first one in~(\ref{EOM_condensate}))
was analyzed in Section~\ref{sec:ActionEtc}, yielding
\begin{equation}
r^2\=\frac 1{4\,\vartheta^2} \= \frac {\alpha'}{2} \qquad\Longrightarrow\qquad
\varsigma^2\= \frac 1{2\alpha'} \und \rho\=\frac{\alpha'}6.
\end{equation}
Next, we want to consider the fourth equation in~(\ref{EOM_condensate}), 
which can be written as
\begin{equation}\label{EinsteinDilatonCombiGaugino:SUSY}
\widehat{\text{Scal}}\ +\ \text{Scal}\ +\ \sfrac 92 |H|^2\=0.
 \end{equation} 
The normalization of the intrinsic torsion $H$ was such that 
$|H|^2=4\varsigma^2$, 
and together with (\ref{totalcurv}) and (\ref{GauginoPsiVrtn}),
we deduce that (\ref{EinsteinDilatonCombiGaugino:SUSY}) is satisfied.

Let us then determine the remaining conditions for $\tilde R$ and $F$ 
following from the equations of motion. Due to (\ref{SigmaT}), 
the third equation in~(\ref{EOM_condensate}) becomes
\begin{equation}\label{gaugino:DilatonEqa}
\sfrac14\alpha'\big[\sfrac {12}{r^4}+
\text{tr}\big(|\tilde R|^2 - |F|^2 \big)\big] \= 
3|H|^2 \=12\varsigma^2,
\end{equation} 
and the Bianchi identity remains unchanged:
\begin{equation}
dH\= \sfrac14\alpha' \text{tr}
\big[ \tilde R\wedge \tilde R - F\wedge F\big].
\end{equation} 
Written in terms of $\rho=\frac 1{12\varsigma^2}=\frac {\alpha'}6$, 
we arrive at the conditions
\begin{equation}\label{dilBianchiSUSY}
\text{tr}\big[|\tilde R|^2 -|F|^2 \big] \=- \sfrac 23\rho^{-2} \und
\text{tr}\big[\tilde R\wedge \tilde R-F\wedge F\big]\=
\sfrac 23\rho^{-1} dH. 
\end{equation}
Further we need the Einstein equation on $K$ 
(the second one in~(\ref{EOM_condensate})), 
which must imply (\ref{gaugino:DilatonEqa}). 
It is easy to see that all terms in this equation are proportional to $g_{ab}$,
so that it becomes equivalent to its trace (\ref{gaugino:DilatonEqa}). 

\paragraph{Choice of connections.}
The gaugino equation (left of (\ref{trgaugino})) is satisfied because
$\eta$ is parallel with respect to $\nabla^-$. The Kalb-Ramond equation
is solved due to (\ref{NK_defRels}), (\ref{H-OmegaRel}) and (\ref{duality}).
The Yang-Mills equation will be obeyed for (generalized) instanton connections.
The nearly K\"ahler geometry of~$K$ determines~$H$, but for $\tilde R$ and $F$
different choices are possible, the consequences of which can be computed 
using~\eqref{BianchiExpressions}. 
To this end, the following lemma will be useful:

\begin{lem}\label{InstantonBianchiDilRel}
Let $F$ be a (generalized) instanton, 
i.e.\ $F$ satisfies the $\omega$-anti-self-duality relation
$$ \ast F\= - \omega \wedge F,$$
(which is equivalent to the vanishing of the gaugino supersymmetry variation). 
Suppose further that 
$$\text{tr}\,(F\wedge F) \= \frac \varkappa \rho\, dH.$$ 
Then it follows that
$$\text{tr}\,|F|^2 \= -\frac \varkappa{\rho^2}.$$
\end{lem}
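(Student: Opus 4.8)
The plan is to pass from the pointwise claim $\tr\,|F|^2 = -\varkappa/\rho^2$ to an identity between top-forms on $K$, by pairing $F$ with itself through the Hodge star and then using the instanton relation to eliminate $\ast F$. The computation is essentially one line; the only real work is keeping the various conventions consistent.

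First I would record the standard identity $\tr(F\wedge\ast F) \= \tr\,|F|^2\,\text{Vol}$, which is just the definition of the Hodge inner product on matrix-valued two-forms combined with the normalization $\tr\,|F|^2 = \sfrac12\tr\,F_{MN}F^{MN}$ fixed in Section~\ref{sec:ActionEtc}; the ``$\tr$ = minus ordinary trace'' convention, where in force, enters both sides equally and is harmless. Next I would substitute the instanton relation $\ast F = -\omega\wedge F$. Because $\omega$ and $F$ both have even degree they may be reordered freely, and $\omega$ is scalar-valued, so it passes through the trace; this yields $\tr(F\wedge\ast F) = -\,\omega\wedge\tr(F\wedge F)$. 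Feeding in the hypothesis $\tr(F\wedge F) = \sfrac{\varkappa}{\rho}\,dH$ then gives
\[
\tr\,|F|^2\,\text{Vol} \= -\,\sfrac{\varkappa}{\rho}\,\omega\wedge dH .
\]

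It remains to compute $\omega\wedge dH$, which is where the nearly K\"ahler structure is used. From~\eqref{NK_defRels} and~\eqref{H-OmegaRel} one has $dH = 2\varsigma^2\,\omega\wedge\omega$, and with $\rho = (12\varsigma^2)^{-1}$ this is $dH = \sfrac16\rho^{-1}\,\omega\wedge\omega$; together with $\omega^3 = 6\,\text{Vol}$ from~\eqref{NK_Form-Vol-Rel} (equivalently $2\ast\omega = \omega\wedge\omega$ and $(\omega,\omega)=3$) one gets $\omega\wedge dH = \sfrac16\rho^{-1}\,\omega^3 = \rho^{-1}\,\text{Vol}$. Substituting back yields $\tr\,|F|^2 = -\varkappa/\rho^2$, as claimed. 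The one place requiring care is the bookkeeping of signs and scales: the sign in the anti-self-duality relation $\ast F = -\omega\wedge F$, the action of $\ast\ast$ on two-forms in six Riemannian dimensions, and the fact that $\omega$ and $\text{Vol}$ are both taken with respect to the rescaled metric $g' = \rho g$ --- consistently, so that $\omega^3 = 6\,\text{Vol}$ and $(\omega,\omega)=3$ persist at every scale $\rho$. That is the main (and essentially only) obstacle.
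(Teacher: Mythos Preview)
Your proof is correct and follows essentially the same route as the paper's: both substitute the instanton relation $\ast F=-\omega\wedge F$ into $\tr(F\wedge\ast F)$, pull out $\omega\wedge\tr(F\wedge F)=\sfrac{\varkappa}{\rho}\,\omega\wedge dH$, and then reduce $\omega\wedge dH$ to the volume form via $dH=\sfrac{1}{6\rho}\,\omega\wedge\omega$ and $\omega^3=6\,\text{Vol}$. The only cosmetic difference is that the paper applies $\ast$ throughout to work with scalars, whereas you carry the top-form $\text{Vol}$ explicitly and cancel it at the end.
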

\begin{proof}
We have
\begin{equation*}
\text{tr}\,|F|^2 \= \text{tr}\ast\!(F\wedge \ast F) 
          \= -\text{tr}\ast\!(F\wedge \omega \wedge F) 
          \= -\frac\varkappa\rho \ast\!\big(\omega \wedge dH \big) 
          \= -\frac \varkappa{6\rho^2} \ast\!\omega^3\=-\frac\varkappa{\rho^2},
\end{equation*}
where the relations $\omega^3=6\,$Vol and $dH=\frac 1{6\rho}\omega\wedge\omega$ 
for nearly K\"ahler spaces in our particular normalization have been used. 
\end{proof}
The lemma implies that if both $\tilde R$ and $F$ are instantons, 
with tr$(F\wedge F)\sim dH$, then the two equations \eqref{dilBianchiSUSY} 
become equivalent to one another, and we need to solve only one of them. 
Hence, in this case the vanishing of the supersymmetry constraints plus the 
Bianchi identity imply the field equations, as expected.
 
The gauge field $A$ proposed by Frey and Lippert is the torsionful connection 
$\Gamma^\kappa$ as in \eqref{kappaConnection}, where the value of $\kappa$ 
has to be adjusted to the Bianchi identity. However, they employ
$\tilde\Gamma=\Gamma^+$.\footnote{
Note that our $\Gamma^+$ is their $\Gamma^-$.} 
The problem with this choice is that $\Gamma^\kappa$ 
is in general not an instanton, leading to a non-vanishing gaugino variation. 
We will therefore consider $\tilde\Gamma=\Gamma^-$ and other possibilities
of the gauge connection as well.

\paragraph{The minus-connection: $\mathbf{\tilde \Gamma=\Gamma^-}$.}
First we consider solutions where $\tilde R=R^-\big|_\mathfrak m$ 
and $F$ is also an instanton, i.e.\ $F=R^-\big|_\mathfrak h$ or, 
for $\mathfrak h$ abelian, $F=\lambda R^-\big|_\mathfrak m$ with 
$\lambda\in\mathbb R$. The Bianchi identity reads
\begin{equation}
\text{tr}\, F\wedge F \= \big(\beta - \sfrac 53\big)\rho^{-1}dH.
\end{equation} 
Choosing the canonical $H$-connection 
\begin{equation}
F\=R^-\big|_\mathfrak h \qquad\text{with}\qquad 
\text{tr}\,(F\wedge F)\= -\sfrac\beta\rho\,dH
\qquad\Longrightarrow\qquad \beta=\sfrac56, 
\end{equation}
which is not among the admissible values. 
A solution is obtained however on SU(3)/U(1)$\times $U(1) $(\beta=0$), 
where we can choose 
\begin{equation}
F\=\sqrt{\sfrac 53}\,R^-\big|_\mathfrak m \qquad\Longrightarrow\qquad
\text{tr}\,(F\wedge F)\=-\sfrac 53\,dH.
\end{equation} 
The explicit expressions for $F$ and $\tilde R$ can be found 
in~\eqref{CurvaturesExplicit}.

\paragraph{The plus-connection: $\mathbf{\tilde \Gamma=\Gamma^+}$.}
Now we turn to $\tilde R=R^+$ solutions, so that $\tilde R$ is not an instanton
any more. Here, tr($R^+\!\wedge R^ +)$ and tr$\,|R^ +|^2$ have been calculated 
in~\eqref{BianchiExpressions} and lead to the conditions 
\begin{equation}
\begin{aligned}
\text{tr}\,F\wedge F&\=\big(\beta-\sfrac23\big)\rho^{-1}dH 
\qquad\ (\text{Bianchi}), \\[2pt]
\text{tr}\,|F|^2 &\= (2-\beta)\,\rho^{-2} 
\qquad\qquad (\text{Dilaton}), 
\end{aligned}
\end{equation}
in contradiction with Lemma \ref{InstantonBianchiDilRel}. 
The Bianchi identity on SU(2)$^3$/SU(2) with $\beta=1/3$ can be obeyed for
$F=R^-\big|_\mathfrak h$, but this choice fails to solve the dilaton equation
(second of (\ref{EinsteinDilatonCombiGaugino})). 
Thus we confirm Ivanov's result that only for 
$\tilde\Gamma=\Gamma^-$ the field equations follow. 
The string theory formulated in this background then has a conformal anomaly, 
but is otherwise consistent \cite{Polchinski}.
The situation can be remedied by turning on a dilatino condensate.
This introduces the freedom necessary to generate consistent solutions with 
$\tilde\Gamma=\Gamma^+$ on all nearly K\"ahler spaces~\cite{Manousselis:2005xa}.

\section{Non-supersymmetric solutions with vanishing gaugino}
\label{sec:nonSUSYNKCptf}

 We will now construct a solution of the bosonic equations of motion on AdS$_4\times K$ with vanishing gaugino, which necessarily breaks supersymmetry. 
As before we choose the $H$-field equal to the intrinsic torsion. 

\paragraph{Equations of motion for $\mathbf{\chi=0}$.}
 For $\phi=0$ and $\chi=0$ the equations \eqref{EinsteinDilatonCombiGaugino} simplify to
\begin{equation}
 \begin{aligned}\label{non-susy-dilaton}
|H|^2 &\=\sfrac14\alpha'\text{tr}\big[\sfrac{12}{r^4}+|\tilde R|^2-|F|^2\big]
\qquad\Longrightarrow\qquad \sfrac{48}{{\alpha'}^2}\=  
\sfrac4{\alpha'}|H|^2 +\text{tr}\,|F|^2-\text{tr}\,|\tilde{R}|^2,\\[2pt]
     0&\= -\sfrac{24}{r^2}+2\,\text{Scal} +|H|^2 
\qquad\quad\ \,\Longrightarrow\qquad\ \,
\sfrac{24}{\alpha'}\=\text{Scal} +\sfrac 12 |H|^2
\qquad\Longrightarrow\qquad \rho=\frac{\alpha'}9,
 \end{aligned}
\end{equation}
by virtue of $r^2=\frac{\alpha'}2$. The value for $\rho$ differs from the
supersymmetric one~(\ref{totalcurv}).
The Bianchi identity and the remaining equations of motion become
\begin{equation}\label{RemEqns2}
\begin{aligned}
dH&\=\sfrac14\alpha'\text{tr}\big[\tilde R\wedge\tilde R-F\wedge F\big],\\[4pt]
0&\=d\ast\!H,\\[4pt]
0&\=d F + A\wedge \ast F- \ast F\wedge A  +\ast H\wedge F,\\
0&\= \text{Ric}_{ab} -\sfrac14 H_{acd}{H_b}^{cd} +\sfrac14\alpha' 
\big[\tilde R_{acde}{\tilde R_{b}}^{\ cde}-\text{tr}F_{ac}{F_b}^c\big]\\[2pt]
&\= \sfrac13 g_{ab} +\sfrac14\alpha'\big[ 
\tilde R_{acde} {\tilde R_{b}}^{\ cde} - \text{tr}F_{ac} {F_b}^c\big],
\end{aligned}
\end{equation}
Since all terms in the Einstein equation are proportional to~$g_{ab}$, 
its information is already contained in the trace,
\begin{equation}\label{non-susy-dilaton2}
\frac {36}{(\alpha')^2} \= \text{tr}\Big[|F|^2 -|\tilde R|^2\Big],
\end{equation}
which is just the first equation of \eqref{non-susy-dilaton}. We will solve the Yang-Mills equation simply by imposing the instanton condition on $F$, so that all equations are satisfied, except for the Bianchi identity and the dilaton equation \eqref{non-susy-dilaton2}. 
 Let us again assume that both $F$ and $\tilde R$ satisfy the instanton condition, as well as 
\begin{equation}
\tr\,(\tilde R\wedge R)\=\frac {\varkappa_1}{\rho}\, dH \und 
\tr\,(F\wedge F) \= \frac{\varkappa_2}{\rho }\, dH. 
\end{equation}
Using lemma \ref{InstantonBianchiDilRel}, we can rewrite these two conditions as
\begin{equation}
\Big( \frac {\varkappa_1}\rho-\frac {\varkappa_2}\rho\Big) \= \frac 4{\alpha'} 
\und \Big( \frac {\varkappa_1}{\rho^2}-\frac {\varkappa_2}{\rho^2}\Big) \= 
\frac {36}{(\alpha')^2}. 
\end{equation}

\paragraph{The minus-connection: $\mathbf{\tilde\Gamma=\Gamma^-}$.}
Let us try to fulfil the equations using $\Gamma^-$. 
According to the discussion above, a single equation remains to be solved:
\begin{equation} \label{single}
\text{tr }F\wedge F \=\big(\beta -\sfrac{13}{9} \big)\rho^{-1}  dH. 
\end{equation}
The most obvious choice is the canonical $H$-connection, 
i.e.\ $\Gamma^-\big|_\mathfrak h$. 
This however has tr$_{\mathfrak h}F\wedge F=-\beta\,dH$, 
so that (\ref{single}) implies $\beta=\frac{13}{18 }$, which is not admissible.
Only on SU(3)/U(1)$\times $U(1) we achieve a solution for 
$F=\sqrt{\sfrac{13}9}\,R^-\big|_\mathfrak m$. 

\paragraph{The plus-connection: $\mathbf{\tilde\Gamma=\Gamma^+}$.} 
In this case, the Bianchi identity and Einstein equation read
\begin{equation}
\text{tr } F\wedge F \= \big(\beta-\sfrac 49\big)\rho^{-1}dH \und
\text{tr} |F|^2\=\sfrac {16-9\beta}{9\rho^2}.
\end{equation}
These equations do not admit instanton solutions for $F$, as 
\begin{equation}
\text{tr } F\wedge F \= \big(\beta-\sfrac 49\big)\rho^{-1}dH
\qquad\buildrel\text{Lemma \ref{InstantonBianchiDilRel}}\over\Longrightarrow
\qquad \text{tr } |F|^2\= \big(\sfrac 49-\beta\big)\rho^{-2},
\end{equation}
which leads to the contradiction $\frac49-\beta=\frac{16}9-\beta$.

Furthermore, there are no models with a conformal anomaly, 
satisfying only the Bianchi identity.
Non-instanton solutions are not excluded by this argument, but these have 
the drawback that the Yang-Mills equations have to be checked explicitly.  

Another possibility is to allow for more general torsionful connections, 
$\tilde\Gamma=\Gamma^\kappa$, and maybe also $F=\Gamma^{\kappa'}$. 
This does not lead to further solutions however, as one can see from the 
relevant expressions calculated in~\eqref{BianchiExpressions}.

\paragraph{The volume modulus.} 
One may try to deform our supersymmetric solution to obtain a family
of solutions, possibly relaxing the supersymmetry constraint. A simple option
is a rescaling
\begin{equation}
\chi' =\tau \chi, \qquad g'=\rho\,g \und F'=\theta F.
\end{equation}
In this case, the dilaton and Einstein equations together with the Bianchi 
identity imply 
\begin{equation}
\tau=0 \qquad\text{or}\qquad \tau=1 \qquad\Longrightarrow\qquad
\alpha'=9\rho \qquad\text{or}\qquad \alpha'=6\rho
\end{equation}
and the corresponding values for $F$ we found above. 
In particular, our two solutions to the equations of motion cannot be 
continuously connected by an obvious path in the field space.

\section{Conclusions}
 
 We found that for two nearly K\"ahler spaces, SU(2)$^3$/SU(2) and SU(3)/U(1)$\times$U(1), the supersymmetric model of Frey and Lippert can be completed by an appropriate choice of gauge field, and that the equations of motion are satisfied on the second space, whereas the first one has a conformal anomaly at order $\alpha'$, i.e. the dilaton equation (fourth of \eqref{EOM_split}) does not hold. This is due to the choice  $\tilde \Gamma=\Gamma^+$ required by the SU(2)$^3$/SU(2)-model, whereas on SU(3)/U(1)$\times$U(1) we can work with the instanton connection $\tilde \Gamma=\Gamma^-$. 
The explicit form of our solution on 
AdS${}_4(r)\,\times$ SU(3)/U(1)$\times$U(1) 
with radius $r=\sqrt{\alpha'/2}$ and internal scale $\rho=\alpha'/6$ reads
\begin{equation}
\begin{aligned}
g_{ab} &\= -\sfrac12\alpha' f_{ad}^c f_{bc}^d \= 
-\sfrac12\alpha' f_{ak}^c f_{bc}^k, \\[2pt]
(\tilde\Gamma)_a^c &\= (\Gamma^-)_a^c \= f_{ic}^a\,e^i ,\\[2pt]
T_{abc} &\= 3\,H_{abc} \= -3\,f_{abc}, \\
F_{ab} &\= \sqrt{\sfrac53}\,R_{ab}^-\big|_{\mathfrak m} \=
-\sqrt{\sfrac 53}\,f^k_{ab}\ \text{ad}(E_k), \\[2pt]
\phi &\= \psi \= \lambda \= 0, \\[2pt]
\chi &\= \widehat\chi\otimes\eta\ +\ \widehat\chi^*\otimes\eta^*
\und \langle\Sigma_{abc}\rangle \= -\sfrac43\,T_{abc} \= 4\,f_{abc}. 
\end{aligned}
\end{equation}

On the same space (but with $\rho=\alpha'/9$) 
we were also able to construct a solution to the equations 
of motion with non-vanishing fermionic supersymmetry variations,
\begin{equation}
\begin{aligned}
g_{ab} &\= -\sfrac13\alpha' f_{ad}^c f_{bc}^d \= 
-\sfrac13\alpha' f_{ak}^c f_{bc}^k, \\[2pt]
(\tilde\Gamma)_a^c &\=(\Gamma^-)_a^c \= f_{ic}^a\,e^i, \\[2pt]
T_{abc} &\= H_{abc} \= -f_{abc}, \\
F_{ab} &\= \sqrt{\sfrac{13}9}\ R_{ab}^-\big|_{\mathfrak m} \=
-\sqrt{\sfrac{13}9}\,f^k_{ab}\ \text{ad}(E_k), \\[2pt]
\phi &\= \psi \= \lambda \= \chi \= 0 
\qquad\Longrightarrow\qquad \langle\Sigma\rangle\=0.
\qquad\qquad\qquad\quad
\end{aligned}
\end{equation}

It was known previously \cite{Manousselis:2005xa} that there exist complete solutions of the supersymmetry constraints and Bianchi identity (with $\tilde \Gamma=\Gamma^+$) on every nearly K\"ahler manifold if one includes a dilatino condensate, but the derivation of the equations of motion for this case is more involved. In the limit of vanishing dilatino condensate, these solutions still differ from ours.

A nice feature of the space-time Einstein equation with $\alpha'$ correction, 
\begin{equation}
\text{Ric}_{\mu\nu}\ +\ \sfrac14\alpha' 
R_{\mu\alpha\beta\gamma}{R_\nu}^{\alpha\beta\gamma}\=0,
\end{equation} 
is that it fixes the radius of AdS$_4$ (in terms of $\alpha'$) and thereby, in combination with the vanishing of the gravitino variation, also the scale of the internal manifold. There is thus no volume modulus in the game, an argument which apparently does not apply to Minkowski compactifications. 
It is not the supersymmetry constraints which fix the AdS$_4$ radius; even our non-supersymmetric solution seems to have no volume modulus. However, it is not clear how the relation between $r$ and $\alpha'$ behaves under the inclusion of higher-order $\alpha'$ corrections.

What concerns the connection $\tilde \Gamma$ we confirmed Ivanov's result \cite{Ivanov} that supersymmetry constraints, Bianchi identity, and equations of motion are compatible only for $\tilde \Gamma=\Gamma^-$, even in the case of AdS$_4$ compactifications with a gaugino condensate. Even without supersymmetry, we found that the choice $\tilde \Gamma=\Gamma^-$ is mandatory.

\paragraph{Acknowledgments.}
This work was partially supported by the Deutsche Forschungsgemeinschaft,
the cluster of excellence QUEST, the Heisenberg-Landau program and 
RFBR grant 09-02-91347.

\end{document}